\documentclass[a4paper,UKenglish]{lipics-v2018}
\nolinenumbers
 
\usepackage{microtype}%if unwanted, comment out or use option "draft"
%\graphicspath{{./graphics/}}%helpful if your graphic files are in another directory

\newtheorem{fact}[theorem]{Fact}

\usepackage{latexsym} %,tabularx}
\usepackage{amsmath,amssymb}
\usepackage{comment}
\usepackage{url}
\usepackage{subcaption}
\usepackage{makeidx}
\usepackage{algorithm}
\usepackage{algorithmicx}
\usepackage[noend]{algpseudocode}
\usepackage{dcolumn}
\usepackage{multirow}

\newcommand{\floor}[1]{\left \lfloor #1 \right \rfloor}
\newcommand{\ceil}[1]{\left \lceil #1 \right \rceil}

\newcommand{\alphabet}{\mathcal{A}}

\newcommand{\rlesize}[1]{|{#1}|_{\mathsf{rle}}}

\newcommand{\emptystr}{\varepsilon} % empty string

\newcommand{\slp}{\mathcal{S}} % SLP
 % genearl grammar
\newcommand{\repair}{\mathsf{RePair}} % algorithm of repair
\newcommand{\vars}{\mathcal{V}}
\newcommand{\rules}{\mathcal{D}}
\newcommand{\hvars}[1]{\mathcal{V}_{#1}}
\newcommand{\hrules}[1]{\mathcal{D}_{#1}}

\newcommand{\dtree}{\mathcal{T}} % derivation tree
\newcommand{\gtext}[1]{\mathit{T}_{#1}} % global text
\newcommand{\letters}[1]{\Sigma_{#1}} % #1: level

\newcommand{\lchar}[1]{\acute{#1}}
\newcommand{\rchar}[1]{\grave{#1}}

\newcommand{\isSB}[1]{\mathsf{isSB}({#1})} % leftmost letter of a variable #1
 % leftmost letter of a variable #1
 % rightmost letter of a variable #1
% \newcommand{\lmb}[1]{\mathsf{LMB}({#1})} % leftmost block of a variable #1
% \newcommand{\rmb}[1]{\mathsf{RMB}({#1})} % righttmost block of a variable #1
\newcommand{\lmb}[1]{\lambda({#1})} % leftmost block of a variable #1
\newcommand{\rmb}[1]{\rho({#1})} % righttmost block of a variable #1

\newcommand{\NULL}{\mathsf{NULL}}
\newcommand{\pol}{\mathsf{PopOutLet}} % pop out letter
\newcommand{\pil}{\mathsf{PopInLet}} % pop in letter
 % pop out block
 % pop in block

\newcommand{\freq}{\mathsf{freq}}
\newcommand{\vocc}{\mathsf{vocc}} % # of occurrence of variables in the derivation tree
 % # of occurrence

\newcommand{\rg}[1]{\mathcal{S}_{#1}} % recompression grammar

\newcommand{\val}[1]{\mathit{val}_{#1}} % expand grammar at the level

% Author macros::begin %%%%%%%%%%%%%%%%%%%%%%%%%%%%%%%%%%%%%%%%%%%%%%%%
\title{RePair in Compressed Space and Time
%\footnote{}
}
%\titlerunning{} %optional, in case that the title is too long; the running title should fit into the top page column

\author{Kensuke~Sakai}{Kyushu Institute of Technology, {680-4 Kawazu, Iizuka, Fukuoka 820-8502, Japan}}{k\_sakai@donald.ai.kyutech.ac.jp}{}{}
\author{Tatsuya~Ohno}{Kyushu Institute of Technology, {680-4 Kawazu, Iizuka, Fukuoka 820-8502, Japan}}{t\_ohno@donald.ai.kyutech.ac.jp}{}{}
\author{Keisuke~Goto}{Fujitsu Laboratories Ltd., 4-1-1 Kamikodanaka, Nakahara, Kawasaki, Kanagawa 211-8588, Japan}{goto.keisuke@jp.fujitsu.com}{https://orcid.org/0000-0001-6964-6182}{}
\author{Yoshimasa~Takabatake}{Kyushu Institute of Technology, {680-4 Kawazu, Iizuka, Fukuoka 820-8502, Japan}}{takabatake@ai.kyutech.ac.jp}{}{}
\author{Tomohiro~I}{Kyushu Institute of Technology, {680-4 Kawazu, Iizuka, Fukuoka 820-8502, Japan}}{tomohiro@ai.kyutech.ac.jp}{https://orcid.org/0000-0001-9106-6192}{Supported by JSPS KAKENHI Grant Number JP16K16009.}
\author{Hiroshi~Sakamoto}{Kyushu Institute of Technology, {680-4 Kawazu, Iizuka, Fukuoka 820-8502, Japan}}{hiroshi@ai.kyutech.ac.jp}{}{}

\authorrunning{
  K. Sakai, T. Ohno, K. Goto, Y. Takabatake, T. I, and H. Sakamoto
}%mandatory. First: Use abbreviated first/middle names. Second (only in severe cases): Use first author plus 'et. al.'

\Copyright{
Kensuke Sakai, Tatsuya Ohno, Keisuke Goto, Yoshimasa Takabatake, Tomohiro I, and Hiroshi Sakamoto
}%mandatory, please use full first names. LIPIcs license is "CC-BY";  http://creativecommons.org/licenses/by/3.0/

\subjclass{
  Data structures design and analysis $\rightarrow$ Data compression
}% mandatory: Please choose ACM 2012 classifications from https://www.acm.org/publications/class-2012 or https://dl.acm.org/ccs/ccs_flat.cfm . E.g., cite as "General and reference $\rightarrow$ General literature" or \ccsdesc[100]{General and reference~General literature}. 

\keywords{
  Grammar compression, RePair, Recompression
}% mandatory: Please provide 1-5 keywords

\category{}%optional, e.g. invited paper

\relatedversion{}%optional, e.g. full version hosted on arXiv, HAL, or other respository/website

\supplement{}%optional, e.g. related research data, source code, ... hosted on a repository like zenodo, figshare, GitHub, ...

\funding{}%optional, to capture a funding statement, which applies to all authors. Please enter author specific funding statements as fifth argument of the \author macro.

\acknowledgements{}%optional

% Author macros::end %%%%%%%%%%%%%%%%%%%%%%%%%%%%%%%%%%%%%%%%%%%%%%%%%

%Editor-only macros:: begin (do not touch as author)%%%%%%%%%%%%%%%%%%%%%%%%%%%%%%%%%%
\EventEditors{John Q. Open and Joan R. Access}
\EventNoEds{2}
\EventLongTitle{42nd Conference on Very Important Topics (CVIT 2016)}
\EventShortTitle{CVIT 2016}
\EventAcronym{CVIT}
\EventYear{2016}
\EventDate{December 24--27, 2016}
\EventLocation{Little Whinging, United Kingdom}
\EventLogo{}
\SeriesVolume{42}
\ArticleNo{}
%\nolinenumbers %uncomment to disable line numbering
%\hideLIPIcs  %uncomment to remove references to LIPIcs series (logo, DOI, ...), e.g. when preparing a pre-final version to be uploaded to arXiv or another public repository
%%%%%%%%%%%%%%%%%%%%%%%%%%%%%%%%%%%%%%%%%%%%%%%%%%%%%%

\begin{document}

\maketitle

\begin{abstract}
Given a string $\gtext{}$ of length $N$, 
the goal of grammar compression is to construct a small context-free grammar generating only $\gtext{}$.
Among existing grammar compression methods, RePair (recursive paring) [Larsson and Moffat, 1999] is notable
for achieving good compression ratios in practice.
Although the original paper already achieved a time-optimal algorithm to compute the RePair grammar $\repair(\gtext{})$ in expected $O(N)$ time,
the study to reduce its working space is still active so that it is applicable to large-scale data.
In this paper, we propose the first RePair algorithm working in compressed space, i.e., potentially $o(N)$ space for highly compressible texts.
The key idea is to give a new way to restructure an arbitrary grammar $\rg{}$ for $\gtext{}$ into $\repair(\gtext{})$ 
in compressed space and time.
Based on the recompression technique, 
we propose an algorithm for $\repair(\gtext{})$ in $O(\min(N, nm \log N))$ space and expected $O(\min(N, nm \log N) m)$ time 
or $O(\min(N, nm \log N) \log \log N)$ time,
where $n$ is the size of $\rg{}$ and $m$ is the number of variables in $\repair(\gtext{})$.
We implemented our algorithm running in $O(\min(N, nm \log N) m)$ time and show it can actually run in compressed space.
We also present a new approach to reduce the peak memory usage of existing RePair algorithms combining with our algorithms,
and show that the new approach outperforms, both in computation time and space, 
the most space efficient linear-time RePair implementation to date.
\end{abstract}

\section{Introduction}\label{sec:intro}
\subsection{Motivations and Contributions}
Given a string $\gtext{}$ of length $N$, 
the goal of grammar compression is to construct a small context-free grammar generating only $\gtext{}$.
Among existing grammar compression methods, RePair (recursive paring)~\cite{Larsson1999RePair} is notable
for achieving good compression ratios in practice and in theory~\cite{Navarro2008RePairAchievesHighOrderEntropy,Ganczorz2017EntropyBoundsForGrammarCompression}.
The principle of RePair is quite simple to explain:
it chooses one of the most frequent bigrams appearing in $\gtext{}$ more than once and
greedily replaces every occurrence of the bigram with a variable 
whose righthand side is the bigram,
and recursively applies the procedure to the resulting text until there is no bigram with frequency $\ge 2$.
This principle successfully captures the regularities frequently appearing in the text,
and so it has been shown that RePair (or the essence of RePair) has wide range of applications to, e.g.,
word-based text compression~\cite{Wan2003BrowsingAndSearchingCompressedDocuments},
compression of Web graphs~\cite{Claude2007FastCompactWebGraphRepresentation},
compressed suffix trees~\cite{Gonzalez2007CompressedTextIndexesWithFastLocate},
compressed wavelet trees~\cite{Navarro2011PCD},
tree compression~\cite{Lohrey2013Xts}, and
data mining~\cite{Tabei2016ScalablePartialLeastSquaresRegression_GrammarCompressedDataMatrices}.

In their original paper~\cite{Larsson1999RePair},
Larsson and Moffat proposed a time-optimal algorithm to compute the RePair grammar $\repair(\gtext{})$ in expected $O(N)$ time.
The space usage is analyzed as $5N + 4 \sigma^2 + 4 m + \ceil{\sqrt{N}}$ words,
where $\sigma$ is the alphabet size and $m$ is the number of variables in $\repair(\gtext{})$.
However, the space usage is not satisfying since the amount of data becomes larger and larger.
Thus, the study to reduce its working space is still active~\cite{BilleGortzPrezza2017,Bille2017PracticalEffectiveRePair}.

In this paper, we propose the first RePair algorithm working in compressed space, i.e., potentially $o(N)$ space for highly compressible texts.
The key idea is to give a new way to restructure an arbitrary grammar $\rg{}$ for $\gtext{}$ into $\repair(\gtext{})$ 
in compressed space and time.
More precisely, we show how to compute $\repair(\gtext{})$ in $O(\min(N, nm \log N))$ space and $O(\min(N, nm \log N) m)$ time,
and improve\footnote{to be precise, the improvement is achieved only when $m = \omega(\log\log N)$, which is likely to hold for compressible texts}
the expected time complexity to $O(\min(N, nm \log N) \log \log N)$,
where $n$ is the size of $\rg{}$ and $m$ is the number of variables in $\repair(\gtext{})$.
Note that $n$ and $m$ can be exponentially smaller than $N$, while $\log N \le n$.\footnote{$\log N \le m$ is not necessarily true since RePair stops producing variables when the input text is compressed into a string $w$ containing no bigram with frequency $\ge 2$. Still, it holds that $\log N \le m + |w|$.}

With our algorithms one can obtain $\repair(\gtext{})$ from $\gtext{}$ in compressed space as follows:
The input string is first processed by an \emph{online} grammar compression algorithm, such as~\cite{Takabatake2017Solca,Masaki2016OnlineRePair},
that works in compressed space, and then its output grammar is recompressed into $\repair(\gtext{})$.
This fits well the scenario in which data sources (such as embedded devices with sensors) have weaker computational resources,
and thus, the produced data is compressed by a lightweight compression algorithm (to reduce the transmission cost)
and sent to server in which further compression can be conducted.

Restructuring a compressed representation of data into another compressed representation \emph{in compressed space} has its own interest and applications,
and thus, has been widely studied.
In the seminal work~\cite{Charikar2005sgp,slplz} in the field of grammar compression,
restructuring LZ77~\cite{LZ77} into balanced grammars is the key to obtain a reasonable approximation to the smallest grammar.
In~\cite{Goto2011RCT}, a bunch of restructuring algorithms were considered in major lossless compression algorithms including
LZ77~\cite{LZ77}, LZ78~\cite{LZ78}, Bisection~\cite{Nelson1995Bisection}, and RePair~\cite{Larsson1999RePair}.
In~\cite{Bannai2012ELF,Bannai2013CSt}, the authors gave efficient algorithms to convert any grammar compressed string to LZ78.
Recently compressed space LZ77 parsing was achieved using another compressed scheme of
run-length compressed Burrows-Wheeler transform~\cite{2017PolicritiP_Lz77ComputBasedOnRun_Algorithmica,2018BannaiGI_OnlinLz77ParsinAndMatch_CPM}.
Our contribution in this paper is to draw a missing line from admissible grammars to the RePair grammar in Figure~1 of~\cite{Goto2011RCT}.
As pointed out in~\cite{Goto2011RCT,Bannai2012ELF}, restructuring has many applications,
e.g., dynamic updates of compressed strings and efficient computation of normalized compression distance (NCD)~\cite{Cilibrasi2005ClusterByCompression}.
As more and more data is available in compressed form, the importance of restructuring algorithms grows.

We implemented a prototype of our recompression algorithm for RePair with complexities of $O(\min(N, nm \log N))$ space and $O(\min(N, nm \log N) m)$ time.
While we confirm that it actually has a potential to run in compressed space, the running time is not fast enough to conduct comprehensive experiments over various datasets.
Instead of claiming the practicality of the current implementation,
we show some evidence that our $O(\min(N, nm \log N) \log \log N)$-time algorithm could be practical by further algorithmic engineering work.
In particular, our experimental results suggest that the $nm \log N$ term in the theoretical bounds could be loose,
and much smaller, say $O(n)$, for most of the cases in reality.
We also propose a new approach to reduce the peak memory usage of existing RePair algorithms combining with our method.
The experimental results show that the approach is promising, outperforming the most space efficient linear-time implementation
to date both in time and space.

\subsection{Related work.}
There have been several attempts to modify the original RePair grammar to improve its performance
in terms of working space~\cite{Wan2007BlockMergingForRePair,Sekine2014AdaptiveDicShare_RePair,Masaki2016OnlineRePair} and compression ratio~\cite{Ganczorz2017ImproveOnRePair}.

For the approximation ratio of RePair grammar to the smallest grammar generating the input string of length $N$,
Charikar et al.\ proved an upper bound $O((N / \log N)^{2/3})$ and lower bound $\Omega(\sqrt{\log N})$.
The lower bound was recently improved to $\Omega(\log N / \log \log N)$ in~\cite{Hucke2017ApproxRatioOfRePair}.

Our algorithms simulate the replacements of bigrams on grammars.
The technique used here is borrowed from the \emph{recompression} technique of Je\.z,
which has been proved to be a powerful tool in problems related to grammar compression~\cite{Jez2012CMN,Jez2015Aog,Jez2015FFC,Jez2014Aos,I2017LceWithRecompression}
and word equations~\cite{Jez2016OneVariableWordEquation_LinearTime,Jez2016Recompression_WordEquations}.
In particular, the grammar compression method based on recompression~\cite{Jez2015Aog} considers
replacing bigrams in a string with variables level by level like RePair.
The difference from RePair lies in the way of choosing bigrams to be replaced.
Instead of replacing the most frequent bigram in a single round,
recompression chooses several bigrams (which cannot overlap each other)
in a way that a given string shrinks by a constant factor after the round.
This strategy has lots of merits in theory, e.g., it assures that the number of rounds is $O(\log N)$
and the approximation ratio to the smallest grammar is $O(\log N)$, where $N$ is the length of an input string.
Moreover, the procedure is simulated from any grammar of size $n$ in $O(n \log^2(N/n))$ time
(or $O(n \log(N/n))$ time with a slight modification) and $O(n \log (N/n))$ space (see~\cite{I2017LceWithRecompression}).
The mechanism of replacing bigrams on grammars can also be used for RePair in a somewhat straightforward way.
As the way of choosing bigrams is different, we have to thoroughly reanalyze the complexities for RePair,
and as a result, unfortunately, we have lost the theoretical cleanness of recompression.
Still, RePair has a strong merit in practical compression ratio and
we show that our approach is helpful to overcome its weakness, the peak memory usage in compression.

\section{Preliminaries}\label{sec:prelim}

An alphabet $\alphabet$ is a finite set of symbols.
A string over $\alphabet$ is an element in $\alphabet^*$.
For any string $w \in \alphabet^{*}$, $|w|$ denotes the length of $w$.
Let $\emptystr$ be the empty string, i.e., $|\emptystr| = 0$.
Let $\alphabet^{+} = \alphabet^{*} \setminus \{ \emptystr \}$.
For any $1 \leq i \leq |w|$, $w[i]$ denotes the $i$-th symbol of $w$.
For any $1 \leq i \leq j \leq |w|$,
$w[i..j]$ denotes the substring of $w$ beginning at $i$ and ending at $j$.
For convenience, let $w[i..j] = \emptystr$ if $i > j$.
For any $0 \leq i \leq |w|$, $w[1..i]$ (resp.\ $w[|w|-i+1..|w|]$) is called the prefix (resp.\ suffix) of $w$ of length $i$.
We say that a string $x$ \emph{occurs} at the interval $[i..i+|x|-1]$ in $w$ iff $w[i..i+|x|-1] = x$.
A substring $w[i..j] = c^d~(c \in \alphabet, d \geq 1)$ of $w$ is called a \emph{block} iff it is a maximal run of a single symbol, 
i.e., $(i = 1 \vee w[i-1] \neq c) \wedge (j = |w| \vee w[j+1] \neq c)$.

An element $\lchar{c}\rchar{c}$ in $\alphabet^2$ is called a \emph{bigram}, and the bigram is said to be \emph{repeating} iff $\lchar{c} = \rchar{c}$.
When we mention the \emph{frequency} of a bigram $\lchar{c}\rchar{c}$ in $w \in \alphabet^*$,
it actually means the \emph{non-overlapping} frequency, which counts the maximum number of occurrences of $\lchar{c}\rchar{c}$ that do not overlap each other.
While the frequency of a non-repeating bigram is identical to the number of occurrences of $\lchar{c}\rchar{c}$,
the frequency of a repeating bigram is counted by summing up $\floor{d / 2}$ for every block $c^d$ of $c = \lchar{c} = \rchar{c}$.
Let $\freq(\lchar{c}\rchar{c}, w)$ denote the frequency of $\lchar{c}\rchar{c}$ in $w$.

The text subjected to being compressed is denoted by $\gtext{} \in \Sigma^{*}$ with $N = |\gtext{}|$ throughout this paper.
We assume that $\Sigma$ is an integer alphabet $[1..N^{O(1)}]$ and the standard word RAM model with word size $\Theta(\lg N)$.
The time complexities are expected time as RePair algorithms utilize hash functions to look-up/update frequency tables etc.
Also, the space complexities are measured by the number of words (not bits).

In this article, we deal with grammar compressed strings,
in which a string is represented by a Context-Free Grammar (CFG) generating the string only.
We simply use the term grammars or CFGs to refer to such specific CFGs for string compression.
In particular, we consider a normal form of CFGs, called \emph{Straight-Line Programs (SLPs)}, in which
the righthand side of every production rule is a bigram.\footnote{Of course, we ignore any trivial input string of length one or zero.}
Formally, an SLP that generates a string $\gtext{}$ is a triple
$\slp = (\Sigma_{\slp}, \vars_{\slp}, \rules_{\slp})$, where $\Sigma_{\slp}$ is the set of terminals (letters),
$\vars_{\slp}$ is the set of non-terminals (variables),
$\rules_{\slp}$ is the set of deterministic production rules whose righthand sides are in $(\vars_{\slp} \cup \Sigma_{\slp})^{2}$, and
the last variable derives $\gtext{}$.\footnote{We treat the last variable as the starting variable.}

For an SLP $\slp$ with $n = |\vars_{\slp}|$,
note that $N$ can be as large as $2^{n}$, and so, 
SLPs have a potential to achieve exponential compression.
Also, $n \ge \lg N$ is always true.
We treat variables as integers in $[1..n]$ 
(which should be distinguishable from $\Sigma_{\slp}$ by having one extra bit), 
and $\rules_{\slp}$ as an injective function that maps a variable to its righthand side
(i.e., $\rules_{\slp}(X)$ represents a bigram for any $X \in \vars_{\slp}$).
For any $X \in \vars_{\slp}$, if $\rules_{\slp}(X)[1]$ (resp. $\rules_{\slp}(X)[2]$) is from $\vars_{\slp}$, it is called the left (resp.\ right) variable of $X$.
Let $\dtree_{\slp}$ denote the derivation tree of $\slp$.
Note that $\dtree_{\slp}$ is implicitly stored by the production rules in $O(n)$ space, which can be seen as a DAG representation of the tree.
We assume that variables are in a (reversed) topological sort order, i.e., left/right variable of $X$ is smaller than $X$.
Let $\vocc_{\slp}(X)$ denote the number of nodes labeled with $X$ in $\dtree_{\slp}$.
It is a well-known fact that we can preprocess $\slp$ in $O(n)$ time and space to compute $\vocc_{\slp}(X)$ for all $X \in \vars_{\slp}$
by a simple dynamic programming (it reduces to the problem of computing the number of paths from the source to nodes in a DAG).
We assume that given any variable $X$ we can access in $O(1)$ time the information on $X$, e.g., $\rules_{\slp}(X)$ and $\vocc_{\slp}(X)$.
For any variable $X \in \vars$, the string derived from $X$ is denoted by $\val{\slp}(X)$, where we omit $\slp$ when it is clear from context.

RePair~\cite{Larsson1999RePair} is a grammar compression algorithm, 
which recursively replaces the most frequent bigram (tie-breaking arbitrary) into a variable
while there is a bigram with frequency $\ge 2$.
Formally, RePair transforms $\gtext{0} := \gtext{}$ level by level into strings, $\gtext{1}, \gtext{2}, \dots, \gtext{m}$:
at the $h$-th level ($0 \le h$) we are given $\gtext{h}$ and compute $\gtext{h+1}$ 
that is obtained by replacing $\freq(\lchar{c}\rchar{c}, \gtext{h})$ non-overlapping occurrences of 
the most frequent bigram $\lchar{c}\rchar{c}$ in $\gtext{h}$ with
a new variable $\hat{c}$ such that $\hat{c} \rightarrow \lchar{c}\rchar{c}$.
To remove ambiguity in the replacement for a repeating bigram $\lchar{c}\rchar{c}$ with $\lchar{c} = \rchar{c} = c$,
let us conduct a greedy left-to-right parsing on a block $c^d$, namely,
$c^d$ is replaced with $\hat{c}^{\floor{d/2}}$ if $d$ is even, and otherwise $\hat{c}^{\floor{d/2}}c$.
Any appearance $\hat{c}$ in $\gtext{h}$ is treated as a letter in the later rounds,
so we call variable $\hat{c}$ the letter introduced at level $h+1$.
The process shrinks the string monotonically, and finally we get $\gtext{m}$ in which
there are no bigram with frequency $\ge 2$.

Let $\repair(\gtext{})$ denote the grammar obtained by RePair with input $\gtext{}$.
The variables of $\repair(\gtext{})$ consist of the letters introduced at all levels
and the starting variable whose righthand side is $\gtext{m}$.
Except the starting variable, the righthands of the rules are bigrams.

\section{$O(\min(N, nm \log N) m)$-time algorithm}\label{sec:algo_scan}
In this section we show how, given an arbitrary SLP $\slp$ generating $\gtext{}$,
we compute $\repair(\gtext{})$ in $O(\min(N, nm \log N) m)$ time and $O(\min(N, nm \log N))$ space,
where $N$ is the length of $\gtext{}$, and $n$ (resp. $m$) is the number of variables in $\slp$ (resp. $\repair(\gtext{})$).

\subsection{Overview: Recompress $\slp$ into $\repair(\gtext{})$ in compressed space.}\label{sec:overview}

The key idea to compute $\repair(\gtext{})$ in compressed space is to \emph{recompress}
an arbitrary $\slp$ for $\gtext{}$ into $\repair(\gtext{})$ without decompressing $\slp$.
For a clear description, we add two auxiliary variables
that introduce sentinels at the beginning\footnote{we assign index zero to $\#$ so that the indexes in $\gtext{}$ are persistent with the original ones}
$T[0] = \# \notin \letters{}$ and at the end $T[N+1] = \$ \notin \letters{}$:
we define $\rg{0} := (\letters{0}, \hvars{0}, \hrules{0})$ such that 
$\letters{0} := \letters{} \cup \{ \#, \$ \}$,
$\hvars{0} := \vars \cup \{ X_{\#}, X_{\$} \}$, and
$\hrules{0} := \hrules{} \cup \{ (X_{\#} \rightarrow \# X_{n}), (X_{\$} \rightarrow X_{\#} \$) \}$,
where $X_{n}$ is the starting variable of $\slp$.
Clearly, $\rg{0}$ generates $\#\gtext{}\$ $.

We employ the recompression technique~\cite{Jez2012CMN,Jez2015Aog,Jez2015FFC,Jez2014Aos}, invented by Je\.z,
to simulate the transformation from $\gtext{h-1}$ to $\gtext{h}$ on CFGs.
We transform level by level $\rg{0}$ into a sequence of CFGs, 
$\rg{1} = (\letters{1}, \hvars{0}, \hrules{1}), \rg{2} = (\letters{2}, \hvars{0}, \hrules{2}), \dots, \rg{m} = (\letters{m}, \hvars{0}, \hrules{m})$,
where each $\rg{h}$ generates $\#\gtext{h}\$ \in \letters{h}^*$.
Namely, compression from $\gtext{h}$ to $\gtext{h+1}$ is simulated on $\rg{h}$.
We can correctly compute the letters introduced at each level $h+1$ while modifying $\rg{h}$ into $\rg{h+1}$,
and hence, we get all the letters of $\repair(\gtext{})$ in the end.
We note that new variables for $\rg{h}$ are never introduced and 
the modification is done by rewriting righthand sides of the original variables in $\hvars{0}$.
During the modification, the string represented by a variable $X$ could be shorten,
and $X$ could be $\NULL$ meaning that it represents nothing, i.e., $\val{\rg{h}}(X) = \emptystr$.

Here we introduce the special formation of the CFGs $\rg{h}$ (it is a generalization of SLPs):
For any $X \in \hvars{0}$, $\hrules{h}(X)$ consists of an arbitrary number of letters and at most two non-null variables
that are originally in $\hrules{0}(X)$.
More precisely, the following condition holds:
\begin{list}{}{}
\item For any variable $X \in \hvars{0}$, let $\lchar{X}$ (resp. $\rchar{X}$) denote the left (resp.\ right) variable, where
      it represents $\NULL$ if it does not exist.
      Then, $\hrules{h}(X) = \lchar{X} w_{X} \rchar{X}$ with $w_{X} \in \letters{h}^*$, where
      null variables are imaginary and actually removed from $\hrules{h}(X)$.
\end{list}
In addition, we compress $w_{X}$ by the run-length encoding 
so that it can be stored in $O(\rlesize{w_{X}})$ space, where $\rlesize{w_{X}}$ denotes the number of blocks in $w_{X}$.
We define the size of $\hrules{h}(X)$ by $\rlesize{w_{X}}$ plus the number of non-null variables in $\hrules{h}(X)$,
and denote it by $\rlesize{\hrules{h}(X)}$.
The size of $\rg{h}$, denoted by $|\rg{h}|$, is defined by $\sum_{X \in \hvars{0}} \rlesize{\hrules{h}(X)}$.

In Subsection~\ref{sec:freq}, we show how to compute the frequencies of bigrams on $\rg{h}$ in $O(|\rg{h}|)$ time and space.
In Subsection~\ref{sec:replace}, we show, given the most frequent bigram $\lchar{c}\rchar{c}$, how to replace $\lchar{c}\rchar{c}$ with a new letter $\hat{c}$ on $\rg{h}$ to get $\rg{h+1}$ in $O(|\rg{h}|)$ time and space.
In Subsection~\ref{sec:analysis}, we show that $|\rg{h}| = O(\min(N, nh \log N))$ for any level $h$, and thus,
the recompression from $\rg{0}$ to $\rg{m}$ can be done in the claimed time and space complexity.

\subsection{How to compute frequencies of bigrams on $\rg{h}$.}\label{sec:freq}

The goal of this subsection is to show the next lemma:
\begin{lemma}\label{lem:freq}
  Given $\rg{h}$ generating $\gtext{h}$,
  we can compute in $O(|\rg{h}|)$ time and space the frequencies of bigrams appearing in $\gtext{h}$.
\end{lemma}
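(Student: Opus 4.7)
The plan is to process the DAG $\rg{h}$ in a single bottom-up pass over $\hvars{0}$ (in topological order) that first gathers a few summary statistics for each variable and then extracts each variable's contribution to the bigram frequency table. For every variable $X$ with $\hrules{h}(X) = \lchar{X} w_X \rchar{X}$ I would first compute $\vocc_{\rg{h}}(X)$ by the standard DP over the DAG, the leftmost and rightmost letters $\lml(X)$ and $\rml(X)$ of $\val(X)$, the lengths of the leftmost and rightmost blocks $\lmb(X)$ and $\rmb(X)$, and the flag $\isSB(X)$ indicating whether $\val(X)$ is a single block. Each of these satisfies a direct recurrence from the corresponding data for $\lchar{X}$, $\rchar{X}$, and $w_X$; the block lengths may be exponentially large but fit in a constant number of machine words, so the preprocessing runs in $O(|\rg{h}|)$ time and space.

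With this preprocessing in hand, I would initialize an empty hash table $\freq$ on $\letters{h}^2$ and visit each variable $X$ once. I conceptually form the augmented RLE string $\tilde w_X$ by prepending the boundary letter $\rml(\lchar{X})$ if $\lchar{X}$ is non-null and appending $\lml(\rchar{X})$ if $\rchar{X}$ is non-null. The key observation is that every adjacency in $\tilde w_X$ corresponds to a bigram occurrence of $\gtext{h}$ whose two positions have $X$ as their LCA in the parse tree, and conversely every such ``LCA-at-$X$'' occurrence arises this way. Since each occurrence in $\gtext{h}$ has a unique LCA, I add $\vocc_{\rg{h}}(X)$ to $\freq[ab]$ for every non-repeating adjacency $ab$ in $\tilde w_X$; this yields the correct frequencies for non-repeating bigrams.

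The technical heart of the lemma, and the step I expect to be the main obstacle, is extending the above accounting to repeating bigrams. Since RePair's non-overlapping frequency counts $\sum_{c^d} \floor{d/2}$, and since blocks of $c$'s can merge across variable boundaries and even propagate through chains of $\isSB$ variables, $\floor{d/2}$ is not additive along the parse tree. My strategy is to charge each maximal block of $\gtext{h}$ to the lowest variable in whose expansion it first becomes \emph{interior}. Concretely, for each maximal run $c^d$ arising in $\tilde w_X$ I compute its extended length $D$ inside $\val(X)$ by adding $\rmb(\lchar{X})$ (resp.\ $\lmb(\rchar{X})$) whenever the run touches the left (resp.\ right) end of $w_X$ and the matching neighbour letter equals $c$; I then add $\floor{D/2} \cdot \vocc_{\rg{h}}(X)$ to $\freq[cc]$ precisely when the block is actually settled at $X$, which happens exactly when neither of the relevant $\isSB$ flags forces the block to extend past $\val(X)$. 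When $\isSB(\lchar{X})$ or $\isSB(\rchar{X})$ does force the block further up, I defer its counting to the first ancestor at which it closes, relying on the precomputed $\lmb(X), \rmb(X), \isSB(X)$ to continue the extension there.

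The total work per variable $X$ is $O(\rlesize{\hrules{h}(X)})$, since the information about $\lchar{X}$ and $\rchar{X}$ is accessed in $O(1)$ and $\tilde w_X$ is traversed in $O(\rlesize{w_X})$, summing to the claimed $O(|\rg{h}|)$ bound in both time and space. The delicate part of the write-up will be proving that each maximal block of $\gtext{h}$ is charged to exactly one variable; this should follow from a short case analysis on whether the boundary letters match and on the $\isSB$ flags, restated as the invariant that the unique $\isSB$-chain of ancestors that ``carries'' a given block terminates at its designated charging variable.
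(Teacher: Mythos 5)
Your proposal follows essentially the same route as the paper's proof: the same bottom-up auxiliary data ($\vocc$, $\lmb{\cdot}$, $\rmb{\cdot}$, $\isSB{\cdot}$), the same decomposition of bigram occurrences by their unique stabbing (LCA) variable with weight $\vocc(X)$ for non-repeating bigrams, and the same charging of each maximal block $c^d$ to the lowest variable witnessing its maximality, contributing $\floor{d/2}\vocc(X)$ there, all in $O(\rlesize{\hrules{h}(X)})$ work per variable. The only detail worth adding is the paper's sentinels $\#$ and $\$$ (introduced when forming $\rg{0}$), which guarantee that every block eventually becomes interior in some variable's expansion, so your ``defer to the first ancestor where the block closes'' step never falls off the top of the derivation tree.
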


The following fact is useful to compute the frequencies of bigrams in $\gtext{h}$ on $\rg{h}$.
\begin{fact}\label{fact:kushi}
  For any interval $[i..j] \subseteq [0..|\gtext{h}| + 1]$ with $j-i > 0$, there is a unique variable $X \in \hvars{0}$ that
  is the label of the lowest common ancestor of the $i$-th and $j$-th leaf in $\dtree_{\rg{h}}$.
  We say that such $X$ \emph{stabs} $[i..j]$.
\end{fact}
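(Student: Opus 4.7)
The statement is essentially a standard tree fact transported into the grammar setting, so the plan is short. First I would make precise what $\dtree_{\rg{h}}$ is: it is the parse tree obtained by unfolding $\rg{h}$ starting from the (augmented) starting variable $X_{\$}$, where every internal node is labeled with a variable in $\hvars{0}$ and every leaf is labeled with a letter of $\letters{h}$. Because $\rg{h}$ generates $\#\gtext{h}\$$, the leaves read left-to-right spell out $\#\gtext{h}\$$, so the positions $0, 1, \dots, |\gtext{h}|+1$ are in one-to-one correspondence with the leaves of $\dtree_{\rg{h}}$. In particular, the $i$-th and $j$-th leaves are well defined.

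Next, since $j - i > 0$, the two leaves are distinct nodes of the tree. In any rooted tree, two distinct nodes admit a unique lowest common ancestor (LCA): the set of common ancestors is nonempty (the root is one) and totally ordered by the ancestor relation along each root-to-node path, so its minimum is unique. This ancestor cannot itself be a leaf, for if a leaf were an ancestor of another distinct leaf it would have a descendant, contradicting leafness. Hence the LCA is an internal node, and by the labeling convention of $\dtree_{\rg{h}}$ it carries a label $X \in \hvars{0}$. Uniqueness of $X$ as the stabbing variable follows immediately from uniqueness of the LCA node (even though the same variable symbol may label many internal nodes of $\dtree_{\rg{h}}$, only one specific node is the LCA of the $i$-th and $j$-th leaves, and we identify $X$ with its label).

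There is essentially no obstacle here: the only care needed is to distinguish between the \emph{node} (unique) and its \emph{label} (a variable that may appear elsewhere in the tree), and to verify that the derivation tree $\dtree_{\rg{h}}$ is genuinely a tree rather than the DAG storing $\rg{h}$. Both are routine. I would therefore keep the proof to a few lines: define $\dtree_{\rg{h}}$, invoke the correspondence between its leaves and positions in $\#\gtext{h}\$$, and apply the standard LCA existence-and-uniqueness argument, concluding by reading off the label $X$ of the LCA node.
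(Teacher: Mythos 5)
Your proof is correct and is exactly the standard argument intended here: the paper states this as a Fact without proof, relying precisely on the observation that the leaves of $\dtree_{\rg{h}}$ correspond to the positions of $\#\gtext{h}\$$ and that two distinct leaves have a unique (internal, hence variable-labeled) lowest common ancestor. Your care in distinguishing the unique LCA node from its label, and the unfolded derivation tree from the DAG actually stored, matches the paper's conventions, so nothing further is needed.
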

According to Fact~\ref{fact:kushi}, we can detect the occurrences of bigrams
by variables that stab the occurrences without duplication or omission.
In addition, since each variable $X$ can stab at most $\rlesize{\hrules{h}(X)}$ distinct bigrams,
it implies that there are at most $\sum_{X \in \hvars{0}} \rlesize{\hrules{h}(X)} = |\rg{h}|$ distinct bigrams in total.

In order to compute the frequencies, we use the following auxiliary information for all variables,
which can be computed in a bottom-up manner in $O(|\rg{h}|)$ time and stored in $O(n)$ space.
\begin{itemize}
\item $\lmb{X}$: the leftmost block in $\val{\rg{h}}(X)$.
\item $\rmb{X}$: the rightmost block in $\val{\rg{h}}(X)$.
\item $\isSB{X}$: Boolean that represents if $\hrules{h}(X)$ consists of a single block.
\end{itemize}
For any variable $X \in \hvars{0}$ with $\hrules{h}(X) = \lchar{X} w_{X} \rchar{X}$,
we can easily compute $\lmb{X}$, $\rmb{X}$ and $\isSB{X}$ in $O(1)$ time,
assuming that we have computed those for $\lchar{X}$ and $\rchar{X}$:
for example, $\lmb{X}$ is identical to $\lmb{\lchar{X}}$ if the prefix block stops inside $\lchar{X}$,
or it is extended if $\lmb{\lchar{X}}$ can be merged with the first block of $w_{X}$ (and further with $\lmb{\rchar{X}}$).

We first focus on the frequencies of non-repeating bigrams $\lchar{c}\rchar{c}$.
According to Fact~\ref{fact:kushi}, we assign any occurrence $[i..i+1]$ of $\lchar{c}\rchar{c}$ to the variable that stabs $[i..i+1]$
without duplication or omission.
We now intend to count all the occurrences of $\lchar{c}\rchar{c}$ assigned to $X$ in $\hrules{h}(X) := \lchar{X} w_X \rchar{X}$.
Observe that $\lchar{c}\rchar{c}$ appears \emph{explicitly} in $w_X$ or \emph{crosses} the boundaries of $\lchar{X}$ and/or $\rchar{X}$.
Thus, it is enough to compute the frequencies in $\rmb{\lchar{X}} w_{X} \lmb{\rchar{X}}$.
Since each $\lchar{c}\rchar{c}$ found in $\rmb{\lchar{X}} w_{X} \lmb{\rchar{X}}$ appears 
every time a node labeled with $X$ appears in $\dtree_{\rg{h}}$, we count each occurrence of $\lchar{c}\rchar{c}$ in $\rmb{\lchar{X}} w_{X} \lmb{\rchar{X}}$ with the weight $\vocc(X)$.
Hence, the frequencies of non-repeating bigrams can be computed in $O(|\rg{h}|)$ time
while scanning $\rmb{\lchar{X}} w_{X} \lmb{\rchar{X}}$ for all $X \in \hvars{0}$ and 
incrementing the frequency of $\lchar{c}\rchar{c}$ by $\vocc(X)$
whenever we find an occurrence of a non-repeating bigram $\lchar{c}\rchar{c}$ in $\rmb{\lchar{X}} w_{X} \lmb{\rchar{X}}$.

Next we compute the frequencies of repeating bigrams.
To this end, we detect all the blocks with lengths $\ge 2$ without duplication or omission
by assigning each block to the smallest variable that ``witnesses'' the maximality of the block.
Formally, we assign a block occurring at $[i..j]$ to the variable $X$ that stabs $[i-1..j+1]$.
(Note that $[i-1..j+1]$ is always a valid interval thanks to the sentinels $\#$ and $\$ $.)
For any $X$ with $\hrules{h}(X) := \lchar{X} w_{X} \rchar{X}$,
we can find every block assigned to $X$ as a block appearing in $\rmb{\lchar{X}} w_{X} \lmb{\rchar{X}}$,
where we ignore a block that is a prefix/suffix of $\val{\rg{h}}(X)$ because $X$ does not witness its maximality.
Using the information of $\isSB{\lchar{X}}$ and $\isSB{\rchar{X}}$,
we can easily check if a block is a prefix/suffix of $\val{\rg{h}}(X)$.
The frequencies of repeating bigrams can be computed in $O(|\rg{h}|)$ time
while scanning $\rmb{\lchar{X}} w_{X} \lmb{\rchar{X}}$ for all $X \in \hvars{0}$ and 
incrementing the frequency of $c^2$ by $\floor{d/2} \vocc(X)$
whenever we find a block $c^d$ with $d \ge 2$ that is assigned to $X$.

Figure~\ref{fig:compute_freq} shows an example on how to compute the frequencies on grammars.
\begin{figure}[t]
\begin{center}
  \includegraphics[scale=0.26]{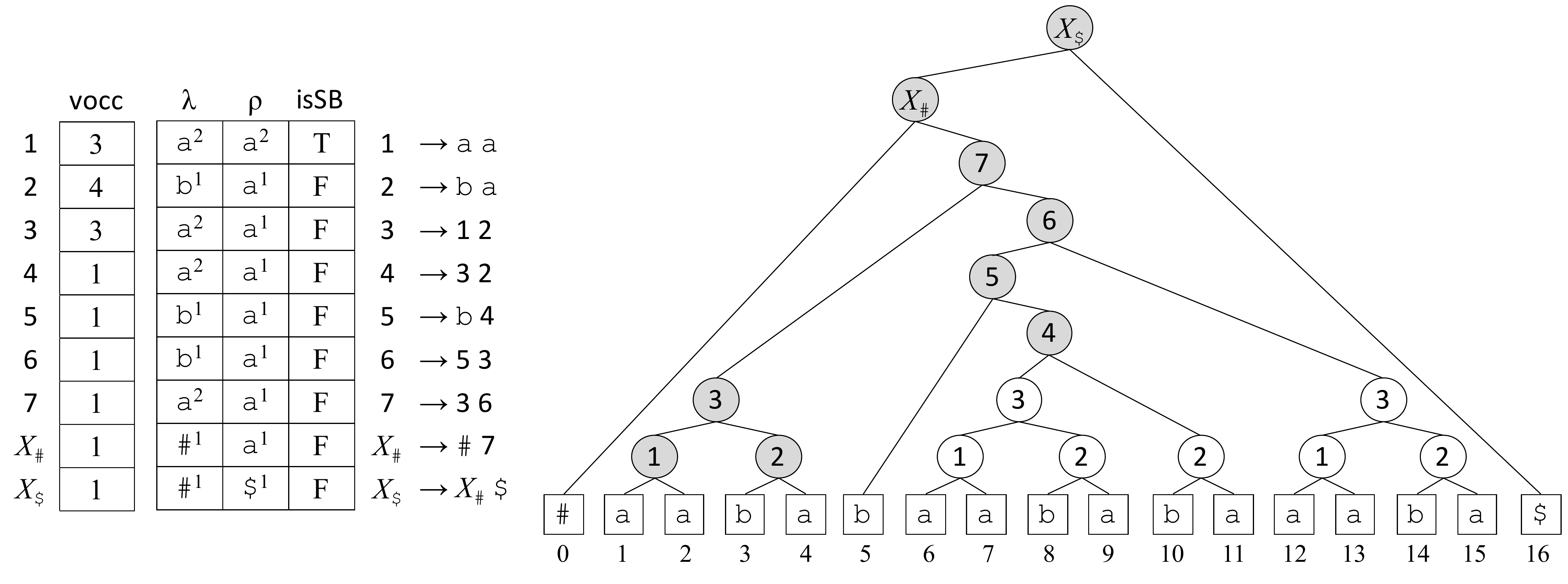}
  \caption{
    An example of $\rg{0}$ and its derivation tree are depicted, where the variables are numbered in post-order and the first appearance of each variable is shaded.
    Using $\vocc$ table, and the information of $\lmb{\cdot}$, $\rmb{\cdot}$ and $\isSB{\cdot}$,
    we can compute the frequencies of each bigram as follows:
    An occurrence of non-repeating bigram $\texttt{ab}$ is stabbed by the variables $\texttt{3}, \texttt{4}$ and $\texttt{7}$, and so,
    the frequency of $\texttt{ab}$ is $1 \times \vocc(\texttt{3}) + 1 \times \vocc(\texttt{4}) + 1 \times \vocc(\texttt{7}) = 3 + 1 + 1 = 5$.
    An occurrence of non-repeating bigram $\texttt{ba}$ is stabbed by the variables $\texttt{2}$ and $\texttt{5}$, and so,
    the frequency of $\texttt{ba}$ is $1 \times \vocc(\texttt{2}) + 1 \times \vocc(\texttt{5}) = 4 + 1 = 5$.
    Finally, since $\texttt{5}$, $\texttt{6}$ and $X_{\#}$ stab intervals which witness the maximality of blocks of $\texttt{a}$'s,
    they count the frequency of $\texttt{aa}$ by
    $\floor{2/2} \times \vocc(\texttt{5}) + \floor{3/2} \times \vocc(\texttt{6}) + \floor{3/2} \times \vocc(X_{\#}) = 1 + 1 + 1 = 3$.
    Note that the variable $\texttt{1}$ stabs $\texttt{aa}$, but it counts nothing as it does not witness the maximality of the block.
  }
\label{fig:compute_freq}
\end{center}
\end{figure}

\subsection{How to transform $\rg{h}$ into $\rg{h+1}$.}\label{sec:replace}

The goal of this subsection is to show the next lemma:
\begin{lemma}\label{lem:replace}
  Given $\rg{h}$ generating $\gtext{h}$ and the most frequent bigram $\lchar{c}\rchar{c}$ in $\gtext{h}$,
  we can transform $\rg{h}$ into $\rg{h+1}$ in $O(|\rg{h}|)$ time and space.
\end{lemma}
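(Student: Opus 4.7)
The plan is to process the variables of $\rg{h}$ in reverse topological order, reusing the auxiliary data $\lmb{\cdot}, \rmb{\cdot}, \isSB{\cdot}$ computed in Section~\ref{sec:freq}. That section already shows that every occurrence of a bigram stabbed by $X$ is visible in the extended view $\rmb{\lchar{X}} w_X \lmb{\rchar{X}}$, so replacement can be driven by the very scan used for frequency counting: for each $X$ I would walk its extended view and rewrite it, either overwriting each occurrence of $\lchar{c}\rchar{c}$ with $\hat{c}$ in the non-repeating case, or greedily rewriting each $c$-block $c^d$ as $\hat{c}^{\floor{d/2}}$ with a leftover $c$ when $d$ is odd in the repeating case, matching the $\freq$ convention of Section~\ref{sec:freq}.

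The subtlety is handling boundary occurrences that straddle $\lchar{X}$ or $\rchar{X}$, because the special format $\hrules{h+1}(X) = \lchar{X} w_X' \rchar{X}$ only permits letters inside $w_X'$. I would address this with a Je\.z-style pop pre-pass embedded in the same reverse topological sweep. In the non-repeating case: for each $Y$ with $\val{\rg{h}}(Y)$ ending in $\lchar{c}$, remove the trailing letter from $Y$'s rule and prepend $\lchar{c}$ to $w_X$ in every rule with $Y = \lchar{X}$; symmetrically, for each $Y$ with $\val{\rg{h}}(Y)$ starting with $\rchar{c}$, remove the leading letter from $Y$'s rule and append $\rchar{c}$ to $w_X$ in every rule with $Y = \rchar{X}$. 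The awkward case $Y = \rchar{X}$ for a last-letter pop, which would need a slot after $\rchar{X}$ that the format forbids, is handled automatically: since $\val{\rg{h}}(X)$ also ends with $\lchar{c}$, the same sweep marks $X$ for its own last-letter pop and the missing compensation inside $X$'s rule is exactly what a consistent pop of $X$ requires. The symmetric argument resolves $Y = \lchar{X}$ for a first-letter pop. In the repeating case the pops move entire $c$-blocks, $\isSB{Y}$ is used to detect when popping consumes all of $Y$'s derived string (so $Y$ becomes $\NULL$), and the $\lmb, \rmb$ fields of the parent indicate whether the injected block should be coalesced with an adjacent $c$-block already present in $w_X$.

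Once the pre-pass finishes, every boundary $\lchar{c}\rchar{c}$ is explicit inside some $w_X$ and the local scan produces the final $w_X'$ together with the updated $\lmb{X}, \rmb{X}, \isSB{X}$ for $\rg{h+1}$. Each variable is visited $O(1)$ times and the work done at $X$ is proportional to $\rlesize{\hrules{h}(X)}$ plus $O(1)$ for the injected letters, so the total time and space are $O\bigl(\sum_{X} \rlesize{\hrules{h}(X)}\bigr) = O(|\rg{h}|)$.

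The main obstacle I expect is the repeating case. Pop-injected blocks may coalesce with adjacent $c$'s already in $w_X$, an original block may be distributed across several rules through a cascade of pops, and the distributed greedy parses must still reconstruct the replacement performed on $\gtext{h}$ consistent with $\freq(\lchar{c}\rchar{c}, \gtext{h})$ from Section~\ref{sec:freq}. Verifying that the run-length encoding of $w_X$ together with $\lmb, \rmb, \isSB$ can be updated consistently in amortized $O(1)$ per pop, and that the per-variable greedy parse agrees with the global one, is the most delicate part of the argument.
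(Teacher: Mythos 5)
Your proposal follows essentially the same route as the paper: a Je\.z-style uncrossing pass (pop-out/pop-in of boundary letters, and of run-length-encoded blocks in the repeating case) driven by $\lmb{\cdot}$, $\rmb{\cdot}$, $\isSB{\cdot}$, followed by replacement of the now-explicit occurrences, in $O(|\rg{h}|)$ total time and space; your bottom-up cascade, including the observation that a pop blocked at a right (resp.\ left) boundary is absorbed by the parent's own pop, is exactly the paper's simultaneous $\pol$/$\pil$ step. The delicacy you flag in the repeating case is benign: once blocks are uncrossed, every maximal $c$-block of $\gtext{h}$ appears explicitly in the rule of the variable witnessing it, so the per-rule greedy parse $c^d \mapsto \hat{c}^{\floor{d/2}}c^{d \bmod 2}$ coincides with the global left-to-right parse.
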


We first focus on the case where $\lchar{c}\rchar{c}$ is non-repeating.
Some of the occurrences of $\lchar{c}\rchar{c}$ are explicitly written in $w_{X}$ 
and the others are crossing the boundaries of left and/or right variables of $X$ for some $X \in \hvars{0}$.
While explicit occurrences can be replaced easily, crossing occurrences need additional treatment.
To deal with crossing occurrences, we first \emph{uncross} them by popping out every $\lchar{c}$ (resp. $\rchar{c}$)
occurring at the rightmost (resp.\ leftmost) position of $\val{\rg{h}}(Y)$ and popping them into the appropriate positions in the other rules.
More precisely, we do the following ``simultaneously'' for all $X \in \hvars{0}$:
\begin{description}
\item[$\pil$]
      If $\hrules{h}(X)$ contains a variable $Y \in \hvars{0}$ in any position other than the first position and $\val{\rg{h}}(Y)[1] = \rchar{c}$, 
      replace the occurrence of $Y$ with $\rchar{c} Y$; and
      if $\hrules{h}(X)$ contains a variable $Y \in \hvars{0}$ in any position other than the last position and $\val{\rg{h}}(Y)[|\val{\rg{h}}(Y)|] = \lchar{c}$, 
      replace the occurrence of $Y$ with $Y \lchar{c}$.
\item[$\pol$]
      If $\hrules{h}(X)[1] = \rchar{c}$, delete it; and
      if $\hrules{h}(X)[|\hrules{h}(X)|] = \lchar{c}$, delete it.
      In addition, if $X$ becomes $\NULL$, we remove all the occurrences of $X$ in $\hrules{h}$.
\end{description}
$\pol$ removes $\lchar{c}$ (resp.\ $\rchar{c}$) from the rightmost (resp.\ leftmost) position of $\val{\rg{h}}(Y)$ (which can be a part of a crossing occurrence of $\lchar{c}\rchar{c}$), 
and $\pil$ introduces the removed letters into appropriate positions in $\hrules{h}$ so that the modified $\rg{h}$ keeps to generate $\gtext{h}$.
The uncrossing can be conducted in $O(|\rg{h}|+n)$ time using the information of $\lmb{\cdot}$ and $\rmb{\cdot}$.
Since all the occurrences of $\lchar{c}\rchar{c}$ are now explicitly written in the righthand sides,
we can easily replace them with a fresh letter $\hat{c}$
while scanning the righthand sides in $O(|\rg{h}| + n)$ time.

Figure~\ref{fig:replace} shows an example on how the replacements of the first level is done on the grammar in Figure~\ref{fig:compute_freq}.
\begin{figure}[t]
\begin{center}
  \includegraphics[scale=0.26]{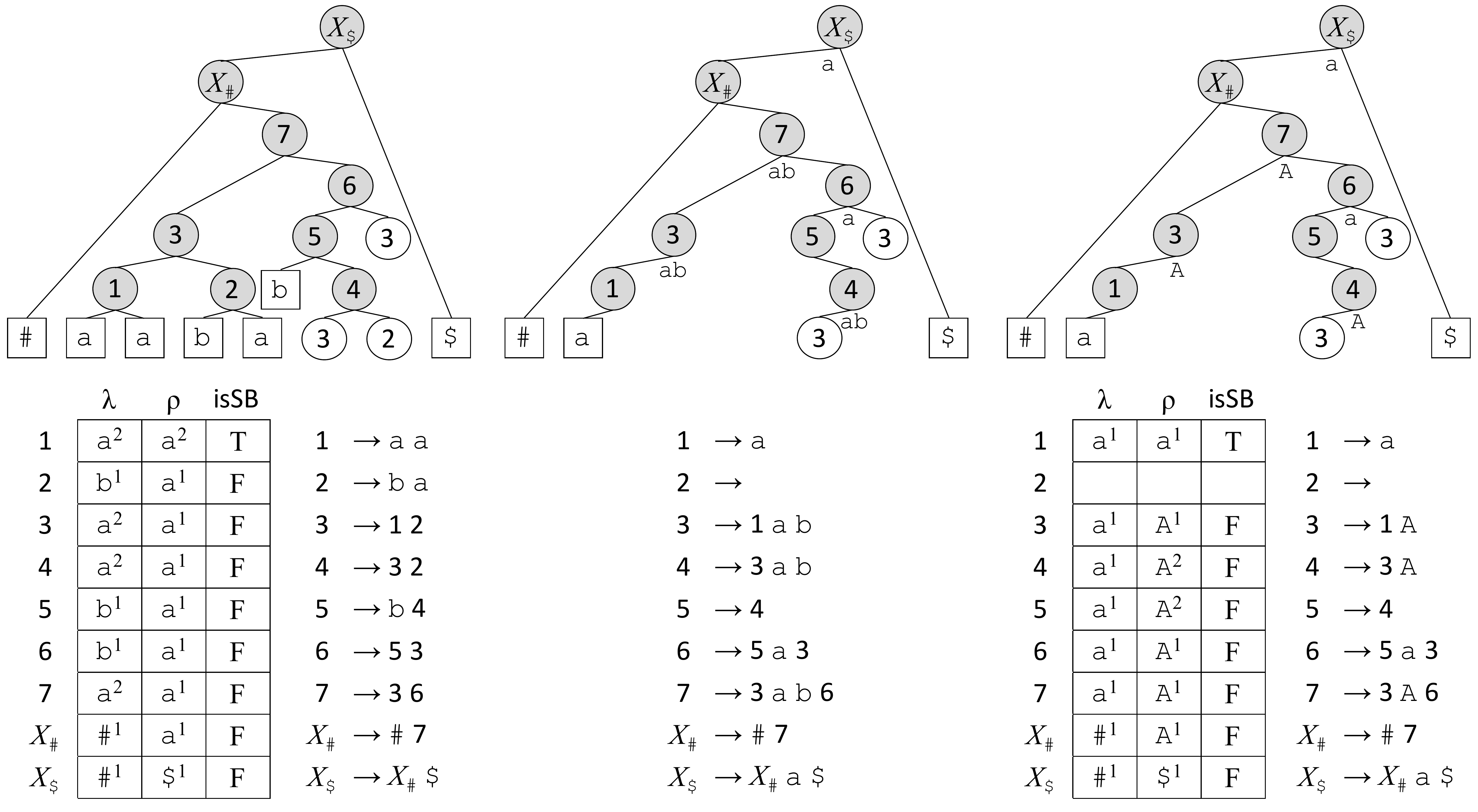}
  \caption{
    A demonstration of the replacements of a non-repeating bigram $\texttt{ab}$ on the grammar in Figure~\ref{fig:compute_freq}.
    The figure in the left depicts the grammar and tables in Figure~\ref{fig:compute_freq},
    where the subtrees under the unshaded nodes are suppressed for presentation.
    The figure in the middle shows an intermediate grammar just after we conduct $\pil$ and $\pol$.
    The figure in the right shows the resulting grammar after replacing every occurrence of $\texttt{ab}$ with $\texttt{A}$,
    and updated tables for the next level.
  }
\label{fig:replace}
\end{center}
\end{figure}

Next we deal with the case $\lchar{c}\rchar{c}$ is a repeating bigram, i.e., $c = \lchar{c} = \rchar{c}$.
We consider the blocks $c^d$ with $d \ge 2$ assigned to $X \in \hvars{0}$,
which can be found in $\rmb{\lchar{X}} w_{X} \lmb{\rchar{X}}$.
In a similar way to the non-repeating case, we first uncross $c^d$ if it starts in $\rmb{\lchar{X}}$ or ends in $\lmb{\rchar{X}}$.
The uncrossing for all variables can be done in $O(n)$ time and space.

\subsection{Analysis.}\label{sec:analysis}
The primal goal of this subsection is to prove Lemma~\ref{lem:size},
which upper bounds the CFG sizes during modification.

\begin{lemma}\label{lem:size}
  For any level $h$, $|\rg{h}| = O(\min(N, nh \log N))$.
\end{lemma}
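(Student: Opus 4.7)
My plan is to establish the two bounds in the minimum separately and combine them.

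For $|\rg{h}| = O(N)$, I would argue that every position of $\#\gtext{h}\$ $ appears explicitly in the $w_X$ of exactly one variable---the smallest $X$ whose rule directly contains that position (analogous to the stabbing decomposition underlying Fact~\ref{fact:kushi}). Since $\vocc(X) \ge 1$ for every non-null variable in $\rg{h}$, this gives $\sum_X |w_X| \le \sum_X |w_X| \cdot \vocc(X) = |\gtext{h}| + 2 \le N + 2$. Combined with $\rlesize{w_X} \le |w_X|$ and at most $2n$ total variable slots across all rules, $|\rg{h}| \le N + 2n = O(N)$.

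For $|\rg{h}| = O(nh \log N)$, I would show $\rlesize{\hrules{h}(X)} = O(h \log N)$ per rule and sum over the $n$ variables. At each level, three kinds of operations affect a single rule: $\pil$ and $\pib$ inject at most $O(1)$ blocks per rule (at most one per side of each of the at-most-two variable slots $\lchar{X}$, $\rchar{X}$); $\pol$ and $\pob$ only remove blocks; and the bigram replacement step increases the count in $w_X$ by at most one per block boundary matching the chosen $\lchar{c}\rchar{c}$ (splitting a boundary $\lchar{c}^e \rchar{c}^f$ with $e, f \ge 2$ into $\lchar{c}^{e-1} \hat{c} \rchar{c}^{f-1}$ adds one block). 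The plan is then to amortize the splits: each split is charged either to a halving of some block length or to the consumption of a previously popped-in letter. Since block lengths lie in $[1, N]$ a block can be halved only $O(\log N)$ times, and only $O(h)$ letters have ever been popped into a rule by level $h$, so the cumulative block count per rule stays $O(h \log N)$. Summing over the $n$ variables in $\hvars{0}$ yields the bound.

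The main obstacle will be the amortized analysis for bigram replacement, since in a single round the block count of a rule can in principle nearly double if many boundaries in its $w_X$ happen to match $\lchar{c}\rchar{c}$, so a simple per-round additive accounting fails. Overcoming this requires a potential function that captures the ``splitting budget'' of the current block-length multiset (essentially the sum of $\log$ of block lengths plus a small additive charge per popped-in letter), so that each split event is paid for from a finite $O(\log N)$-per-block reservoir and the cumulative growth per rule is pinned down to $O(h \log N)$. Taking $\min$ of the two bounds then finishes the lemma.
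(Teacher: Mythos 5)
Your two-part plan follows the paper's proof: the $O(N)$ bound is obtained exactly as in the paper (the letter occurrences written explicitly in the righthand sides number at most $|\gtext{h}|+2$, plus at most $2n$ variable slots), and the $O(nh\log N)$ bound rests on the same accounting --- $O(1)$ popped-in letters/blocks per rule per level, plus a logarithmic budget that pays for blocks created during replacement. The paper phrases the latter as depositing $\log d\le\log N$ credit on every popped-in block $c^d$ and paying the $+1$ incurred when a repeating bigram is replaced on a block of odd length $d>2$ from that credit, which is precisely your halving/reservoir idea, so in spirit the two arguments coincide.

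The one step where your sketch, as stated, would fail is exactly the case you flag as the main obstacle: a non-repeating replacement at a boundary $\lchar{c}^{e}\rchar{c}^{f}$ with $e,f\ge 2$, which becomes $\lchar{c}^{e-1}\hat{c}\rchar{c}^{f-1}$ and adds one block. Your potential (sum of $\log$ of block lengths plus a charge per popped-in letter) drops only by $\log\frac{e}{e-1}+\log\frac{f}{f-1}$, which is below $1$ once $e,f\ge 4$, so it cannot pay the unit cost of such a split; moreover, these long--long boundaries need not be traceable to a pop-in or to a halving, since long blocks can also arise from merges such as $(\lchar{c}\rchar{c})^{k}\to\hat{c}^{k}$, so your dichotomy ``charge to a halving or to a popped-in letter'' does not cover them. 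The resource that actually pays here is the collapse of the block count at merge time; equivalently, one can augment the potential with the number of adjacencies between two blocks of length $\ge 2$ with distinct letters, observing that a $+1$ split destroys one such adjacency while new ones are created only by the $O(1)$ pop-ins per rule per level or by merges that strictly decrease the block count. To be fair, the paper's own proof charges only the odd repeating-block case and is silent on this non-repeating split, so you have identified a genuine subtlety that the published argument glosses over --- but the potential you propose does not yet close it.
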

\begin{proof}
  When transforming $\rg{h}$ into $\rg{h+1}$, there are two situations where the size of the righthand sides increases:
  (1) when letters/blocks are popped in; and (2) when a repeating bigram $cc$ is replaced on a run-length encoded block $c^d$ with odd $d > 2$.
  For (1), it is easy to see that for each variable $X$ the positions where letters/blocks popped in is at most two (the boundaries of left/right variables),
  and thus, the size of $\rg{h}$ increases at most $2(n+2) = O(n)$ for each level.
  For (2), we deposit $\log d \le \log N$ credit whenever a block $c^d$ is popped into some position so that the later increase by case (2) can be paid from the credit.
  Since at most $O(n \log N)$ credit is issued for each level, we obtain the bound $|\rg{h}| = O(nh \log N)$.
  Also, the number of occurrences of letters in the righthand sides of $\hrules{h}$ cannot be larger than the uncompressed size $|\gtext{h}|$,
  and therefore, $|\rg{h}| \le |\gtext{h}| + 2n = O(N)$ holds.
\end{proof}

Our first algorithm running in $O(\min(N, nm \log N) m)$ time and $O(\min(N, nm \log N))$ space is immediately obtained from 
Lemmas~\ref{lem:freq},~\ref{lem:replace} and~\ref{lem:size}.
\begin{theorem}\label{theorem:algo1}
  Given an SLP $\rg{}$ generating $\gtext{}$ of length $N$,
  we can compute $\repair(\gtext{})$ in $O(\min(N, nm \log N) m)$ expected time and $O(\min(N, nm \log N))$ space,
  where $n$ and $m$ are the numbers of variables in $\rg{}$ and $\repair(\gtext{})$, respectively.
\end{theorem}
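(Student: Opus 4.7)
The plan is to combine the three preceding lemmas in a straightforward iterative loop. First I would augment the input SLP with the two sentinel-introducing variables described at the start of Section~\ref{sec:algo_scan} to produce $\rg{0}$ generating $\#\gtext{}\$$, and precompute $\vocc$, $\lmb{\cdot}$, $\rmb{\cdot}$, and $\isSB{\cdot}$ for all variables in $O(n)$ time. Then, for each level $h = 0, 1, 2, \ldots$, I would apply Lemma~\ref{lem:freq} to compute all bigram frequencies in $\gtext{h}$, pick the most frequent bigram $\lchar{c}\rchar{c}$, emit the rule $\hat{c} \rightarrow \lchar{c}\rchar{c}$ for the output grammar, and apply Lemma~\ref{lem:replace} to obtain $\rg{h+1}$ in place (refreshing the auxiliary tables for the next round). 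The loop terminates exactly when the maximum frequency drops below $2$, which by definition of RePair occurs after $m$ iterations, yielding a final string $\gtext{m}$ that serves as the righthand side of the starting variable of $\repair(\gtext{})$. To materialise $\gtext{m}$ I would decompress $\rg{m}$ once at the end; since $|\gtext{m}| \le |\rg{m}|$, this fits the claimed bounds.

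For the running time, Lemmas~\ref{lem:freq} and~\ref{lem:replace} each cost $O(|\rg{h}|)$ at level $h$, so the total expected time is
\[
  O\!\left(\sum_{h=0}^{m-1} |\rg{h}|\right)
  \;=\; O(m \cdot \min(N, nm \log N)),
\]
using Lemma~\ref{lem:size} together with $h \le m$ to bound each $|\rg{h}|$ uniformly. The expectation accounts for the hashing used to maintain the frequency table in Lemma~\ref{lem:freq}. For the space bound, the algorithm only keeps the current $\rg{h}$ (overwriting $\rg{h-1}$ in place) together with the auxiliary tables, all of which occupy $O(|\rg{h}|) = O(\min(N, nm \log N))$ space. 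The output $\repair(\gtext{})$ consists of $m$ binary rules plus one starting rule of length $|\gtext{m}| \le |\rg{m}|$, which also fits in the same space.

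I do not expect a serious obstacle here because the three component lemmas already carry the algorithmic substance; the main care point is ensuring that after each round the modified CFG still conforms to the special formation required by Section~\ref{sec:overview}, namely $\hrules{h+1}(X) = \lchar{X} w_X \rchar{X}$ with $w_X$ run-length encoded, so that Lemmas~\ref{lem:freq} and~\ref{lem:replace} can be reapplied at level $h+1$. This invariant is preserved by Lemma~\ref{lem:replace}, whose $\pil$/$\pol$ procedure produces $\rg{h+1}$ in exactly that shape, with $\NULL$ variables removed and run-length encoding kept consistent across block replacements. Hence the theorem follows by combining Lemmas~\ref{lem:freq}, \ref{lem:replace}, and~\ref{lem:size} in this $m$-round schedule.
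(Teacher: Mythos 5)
Your proposal is correct and follows essentially the same route as the paper's own proof: iterate over the $m$ levels, using Lemma~\ref{lem:freq} to find the most frequent bigram and Lemma~\ref{lem:replace} to transform $\rg{h}$ into $\rg{h+1}$, then bound $\sum_{h}|\rg{h}|$ and $\max_{h}|\rg{h}|$ via Lemma~\ref{lem:size}. The extra details you add (sentinel setup, invariant preservation, materialising $\gtext{m}$ at the end) are consistent with the paper's construction and do not change the argument.
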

\begin{proof}
  We first compute $\vocc_{\slp}(X)$ for all $X \in \vars_{\slp}$ in $O(n)$ time and space.
  At any level $h~(0 \le h < m)$, the transform from $\gtext{h}$ to $\gtext{h+1}$ is simulated on CFGs as follows:
  Given $\rg{h}$ generating $\gtext{h}$,
  we use Lemma~\ref{lem:freq} to compute the most frequent bigram in $\gtext{h}$,
  and Lemma~\ref{lem:replace} to obtain $\rg{h+1}$ that generates $\gtext{h+1}$.
  It can be done in $O(|\rg{h}|)$ time and space.
  Since $|\rg{h}| = O(\min(N, nh \log N))$ due to Lemma~\ref{lem:size},
  we can go through from $\rg{0}$ to $\rg{m}$ in $O(\sum_{h = 0}^{m} |\rg{h}|) = O(\min(N, nm \log N) m)$ time and 
  $O(\max \{|\rg{h}| \mid 0 \le h \le m\}) = O(\min(N, nm \log N))$ space.
\end{proof}

We note that the bound $|\rg{h}| = O(nh \log N)$ of Lemma~\ref{lem:size} could be quite rough
because the analysis considers the following (probably too pessimistic) scenario:
there are $\Omega(h)$ levels at which $\Omega(n)$ run-compressed letters are popped in and 
each of them produces $\Omega(\log N)$ remainders during replacing repeating bigrams on it.
In addition, the analysis does not take into account the fact that each replacement on non-repeating bigrams
reduces the grammar size by one.
It is open if there is an example to achieve the upper bound.
In our preliminary experiments, we observed that $|\rg{h}|$ is just a few times larger than $n$ in highly repetitive datasets.

\section{$O(\min(N, nm \log N) \log\log N)$-time algorithm}\label{sec:algo_list}

In this section, we improve the time complexity $O(\min(N, nm \log N) m)$ of Theorem~\ref{theorem:algo1} to $O(\min(N, nm \log N) \log \log N)$.
It is analogue to improving a naive $O(Nm)$-time RePair algorithm that works on plain text $\gtext{}$ to an $O(N)$-time algorithm.
At level $h~(0 \le h < m)$, the naive algorithm simply scans text $\gtext{h}$ to compute the most frequent
bigram and replace its non-overlapping occurrence with a fresh letter spending $O(|\gtext{h}|)$ time, and thus,
it takes $O(\sum_{h = 0}^{m} |\gtext{h}|) = O(Nm)$ time in total.
The essential idea of~\cite{Larsson1999RePair} to obtain $O(N)$-time algorithm is to:
\begin{enumerate}
\item represent $\gtext{h}$ by a linked list so that replacements can be done locally without breaking adjacent letters apart,
\item maintain, for every bigram in $\gtext{h}$, pointers to traverse all and only the occurrences of the bigram,
\item maintain the frequencies of all bigrams in a priority queue.
\end{enumerate}
At each level $h$, we obtain the most frequent bigram $\lchar{c}\rchar{c}$ from the priority queue and
replace every occurrence of $\lchar{c}\rchar{c}$ using the pointers to visit the occurrences of $\lchar{c}\rchar{c}$.
While replacing each occurrence, we can easily update the linked-list, pointers and frequencies of bigrams 
that are affected by the replacement in constant time.
Since the total number of replacement is at most $N$, the algorithm runs in $O(N)$ time.

We apply this idea to our algorithm in Section~\ref{sec:algo_scan}:
we maintain the linked-list for each righthand side and
pointers to traverse all and only the occurrences of any bigram appearing in the grammar
(it is explicitly written in the grammar rules or crossing the boundaries of left/right variables).
Here updating the information for bigrams crossing the boundaries is sometimes problematic
as the leftmost/rightmost descendants who possess the contexts beyond boundaries dynamically change.
We do not see how we can efficiently maintain it along with replacements, 
but at least we can recollect, for each level $h$, the information by computing 
$\lmb{\cdot}$, $\rmb{\cdot}$ and $\isSB{\cdot}$ in $O(n_{h})$ time
(as we did in the algorithm in Section~\ref{sec:algo_scan}),
where $n_{h}$ is the number of non-null variables in $\rg{h}$.

Note that in the algorithm working on uncompressed texts,
the priority queue can be implemented by a simple linked-list because every single replacement
increases/decreases the frequency of a bigram by one, and we can afford to spend the cost of maintaining the list
to run in $O(N)$ time.
However, this is not satisfiable for our ``compressed-time'' algorithm, which potentially runs in $o(N)$ time.
Thus, we use dynamic data structure for predecessor queries to implement the priority queue.
For example, using the y-fast trie~\cite{Willard1983LWR} we can update the frequency of a bigram in $O(\log \log N)$ expected time
while supporting the function of the priority queue in $O(\log \log N)$ time as well.
Then the algorithm runs in $O(\sum_{h = 0}^{m} n_{h} + R \log \log N)$ time and $O(n + R)$ space,
where $R$ is the total number of replacements executed on the grammars in our algorithm.
Since $R = O(\min(N, nm \log N))$ by Lemma~\ref{lem:size}, we can get the following theorem:

\begin{theorem}\label{theorem:algo2}
  Given an SLP $\rg{}$ generating $\gtext{}$ of length $N$,
  we can compute $\repair(\gtext{})$ in expected $O(\min(N, nm \log N) \log \log N)$ time and $O(\min(N, nm \log N))$ space,
  where $n$ and $m$ are the numbers of variables in $\rg{}$ and $\repair(\gtext{})$, respectively.
\end{theorem}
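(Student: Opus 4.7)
The plan is to lift the classical $O(N)$-time RePair machinery of Larsson and Moffat onto the grammar representation $\rg{h}$ developed in Section~\ref{sec:algo_scan}, so that the per-level cost is charged against the number $n_h$ of non-null variables in $\rg{h}$ and against the number of local replacements, rather than against $|\gtext{h}|$. Concretely, for each righthand side $\hrules{h}(X) = \lchar{X} w_X \rchar{X}$ I would store $w_X$ as a doubly-linked list of run-length-coded blocks, thread through the whole grammar one occurrence list per distinct bigram that collects all of its \emph{explicit} appearances inside the righthand sides, and keep the frequencies of all distinct bigrams in a dynamic predecessor structure over the universe $[1..N]$, namely the y-fast trie of Willard~\cite{Willard1983LWR}, which supports insertion, deletion, increment/decrement and find-max in $O(\log\log N)$ expected time.

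At the start of each level $h$ I would rebuild $\lmb{\cdot}$, $\rmb{\cdot}$, $\isSB{\cdot}$ and $\vocc(\cdot)$ by a bottom-up pass in $O(n_h)$ time exactly as in Subsection~\ref{sec:freq}, and use them to initialize the bigram occurrence lists and their $\vocc$-weighted frequencies in the priority queue. Then I repeatedly extract the most frequent bigram $\lchar{c}\rchar{c}$, execute the $\pil$ and $\pol$ operations of Subsection~\ref{sec:replace} to uncross the boundaries, and replace every explicit occurrence of $\lchar{c}\rchar{c}$ by walking its occurrence list. Each elementary replacement touches only $O(1)$ adjacent list nodes, rethreads $O(1)$ bigram occurrence lists, and issues $O(1)$ frequency updates to the priority queue, costing $O(\log\log N)$ in total.

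The main obstacle is that the occurrences of a bigram that cross the boundary between a variable $Y$ and its parent depend on $\lmb{Y}$ and $\rmb{Y}$, and these border symbols can shift unpredictably whenever descendants of $Y$ are rewritten, since the deepest descendant actually witnessing a boundary letter may sit arbitrarily deep in $\dtree_{\rg{h}}$. I do not see how to maintain the border information incrementally through replacements, so the workaround is exactly the one foreshadowed in the text: I discard all boundary-crossing data at the end of each level and rebuild it from scratch at the start of the next level in $O(n_h)$ time. This recollection is affordable because $\sum_{h=0}^{m-1} n_h = O\bigl(\sum_{h} |\rg{h}|\bigr) = O(\min(N, nm \log N))$ by Lemma~\ref{lem:size}.

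Putting the pieces together, the cumulative expected time is $O\bigl(\sum_{h=0}^{m-1} n_h + R \log\log N\bigr)$, where $R$ is the total number of elementary replacements performed over all levels. Every such replacement is chargeable either to one unit of $|\rg{h}|$ destroyed or to one unit of the credit deposited during the pop-in accounting of Lemma~\ref{lem:size}, so $R = O(\min(N, nm \log N))$, yielding the claimed $O(\min(N, nm \log N) \log\log N)$ time bound. The space bound follows because at any moment we keep only $O(|\rg{h}|)$ list nodes, occurrence pointers and priority-queue entries, which is $O(\min(N, nm \log N))$ by Lemma~\ref{lem:size}.
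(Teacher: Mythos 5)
Your construction is essentially the paper's: per-rule doubly-linked lists over run-length-coded righthand sides, occurrence lists for explicit bigram occurrences, a y-fast trie as the priority queue with $O(\log\log N)$ expected-time updates, a per-level $O(n_h)$ rebuild of $\lmb{\cdot}$, $\rmb{\cdot}$, $\isSB{\cdot}$ because the boundary-crossing information cannot be maintained incrementally, and a final bound of $O(\sum_h n_h + R\log\log N)$ time with $R = O(\min(N, nm\log N))$ charged via Lemma~\ref{lem:size}. The one step that is wrong is your justification of the rebuild cost: the chain $\sum_{h} n_h = O\bigl(\sum_h |\rg{h}|\bigr) = O(\min(N, nm\log N))$ overstates what Lemma~\ref{lem:size} gives. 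That lemma is a per-level bound, so summing over the $m+1$ levels yields only $\sum_h |\rg{h}| = O(\min(N, nm\log N)\,m)$ --- indeed, if your claimed equality held, the scan-based algorithm of Theorem~\ref{theorem:algo1} would already run in $O(\min(N, nm\log N))$ time and the whole point of this section (avoiding a full scan of $\rg{h}$ at every level) would disappear. The repair is simple and is what the paper implicitly relies on: bound the rebuild cost directly by $n_h \le n+2$, so $\sum_{h=0}^{m} n_h = O(nm)$, which is absorbed into the stated bound, and keep the replacement work separate as $R\log\log N$ with $R = O(\min(N, nm\log N))$ exactly as you argue. With that substitution your proof matches the paper's.
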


\section{Experiments}\label{sec:exp}
In this section, we show the results of our preliminary experiments.
We implemented in C++ our algorithm to compute $\repair(\gtext{})$ from an arbitrary grammar $\rg{}$ for $\gtext{}$
running in $O(\min(N, nm \log N) m)$ expected time and $O(\min(N, nm \log N))$ space.

We choose the following three highly repetitive texts in repcorpus,
\texttt{einstein.en.txt} (446 MB), \texttt{world\_leaders} (45 MB) and \texttt{fib41} (255 MB).\footnote{See \url{http://pizzachili.dcc.uchile.cl/repcorpus/statistics.pdf} for statistics of the datasets.}
We first compress each dataset by SOLCA~\cite{Takabatake2017Solca},
a space-optimal online grammar compression, to obtain $\rg{}$, and feed $\rg{}$ to our algorithm.
In theory, SOLCA runs in $O(N \log \log n)$ time and $O(n)$ space.

\begin{table}[t]
\begin{center}
  \begin{tabular}{| r || r | r | r | r | r | r | r | r |} \hline
                    &   time &  space &           $n$ &           $m$ &           Max & $\sum_{h = 0}^{m}|\rg{h}|$ & $\sum_{h = 0}^{m} n_{h}$ &           $R$ \\
                    &    [s] &   [MB] & $2^{10}\times$ & $2^{10}\times$ & $2^{10}\times$ &            $2^{10}\times$ &        $2^{10}\times$ & $2^{10}\times$ \\ \hline \hline
    einstein.en.txt & 5,626 & 27.36 & 413 & 98 & 1,157 & 38,408,764 &  11,149,315 & 5,241 \\ \hline
    world\_leaders & 19,872 & 33.07 & 807 & 204 & 1,920 & 139,854,080 & 36,212,346  & 14,249 \\  \hline 
    fib41 & 20 & 24.01 & 0.4 & 0.04 & 0.2 & 3 & 3 & 6,495 \\ \hline
  \end{tabular}
 \end{center}
 \caption{
   Table showing time and working space for our algorithm to compute RePair from each dataset (including the time and space of SOLCA).
   SOLCA takes $53$, $9$ and $19$ seconds for each dataset.
   The peak memory usage of \texttt{fib41} is from the constant-size hash table used in SOLCA.\@
   For other columns, $n$ is the number of variables in the output grammar $\rg{}$ of SOLCA,
   $m$ is the number of variables in the RePair grammar,
   Max $:= \max \{|\rg{h}| \mid 0 \le h \le m\}$ and
   $R$ is the total number of replacements executed on the grammars in the algorithm.
 }
 \label{table:result}
\end{table}

Table~\ref{table:result} summarizes the results, 
where we also collected some data during the execution, which are useful for understanding the performance.
The running time and working space of our algorithm deeply depend on the compressibility of each dataset.
We confirmed that our algorithm potentially runs in compressed space for repetitive texts.
We see that the recompression part for the extremely compressible text \texttt{fib41} is done in a second.
Unfortunately, for less compressible datasets our implementation does not scale well as $n$ and $m$ become larger.
More precisely, the running time of our algorithm depends on $\sum_{h = 0}^{m}|\rg{h}|$, i.e.,
our algorithm runs in $\Theta(\sum_{h = 0}^{m}|\rg{h}|)$ time.
As the value $\sum_{h = 0}^{m}|\rg{h}|$ is large even for relatively compressible datasets we tested,
it may be hopeless to make the algorithm practical.

As mentioned in Section~\ref{sec:algo_list}, our second algorithm runs
in $O(\sum_{h = 0}^{m} n_{h} + R \log \log N)$ time and $O(n + \max \{|\rg{h}| \mid 0 \le h \le m\})$ space,
where $n_{h}$ is the number of non-null variables in $\rg{h}$ and
$R$ is the total number of replacements executed on the grammars in the algorithm.
Because $R$ is upper bounded by $N$, the term $R \log \log N$ is almost linear in the worst-case.
As we see Table~\ref{table:result}, $R$ is actually much smaller than $N$.
Also, Table~\ref{table:result} shows that $\sum_{h = 0}^{m} n_{h}$ is not so big compared to $\sum_{h = 0}^{m}|\rg{h}|$,
and thus, we expect that our second algorithm runs in a reasonable time.

Next we propose a new approach to reduce the peak memory usage of existing algorithms by combining with our algorithms.
Since the peak memory usage is achieved at the very beginning of RePair, we can avoid it as follows:
introducing paramter $t$, we first use our algorithms until the input text $\gtext{}$ becomes sufficiently small, i.e.,
$|\gtext{h}| < |\gtext{}| / t$, and then, switch to a linear time algorithm that works in $O(|\gtext{h}|)$ time and space.
In our experiments, we combine our implementation described above with
a well-tuned implementation of linear-time RePair by Maruyama~\cite{MaruyamaRP} (denote it by \texttt{RP}).
Setting $t \in \{2, 3, 4, 5\}$, we compare our method with \texttt{RP}
and the most space efficient linear-time algorithm~\cite{Bille2017PracticalEffectiveRePair,SERP} to date (denote it by \texttt{SERP}).
In theory, \texttt{SERP} runs in $O(N/\epsilon)$ time using at most $(1.5 + \epsilon)N$ words of space for arbitrary small $\epsilon \le 1$,
but $\epsilon$ is fixed to $1$ in their implementation.
The results for some datasets from repcorpus are shown in Figure~\ref{fig:graph}.
We can see that our approach successfully slashes the peak memory usage of \texttt{RP}.
Also, the time-space tradeoff is controled by parameter $t$ and
our method with $t = 3$ outperforms \texttt{SERP} both in time and space.

\begin{figure}[t]
  \begin{center}
    \includegraphics[scale=0.62]{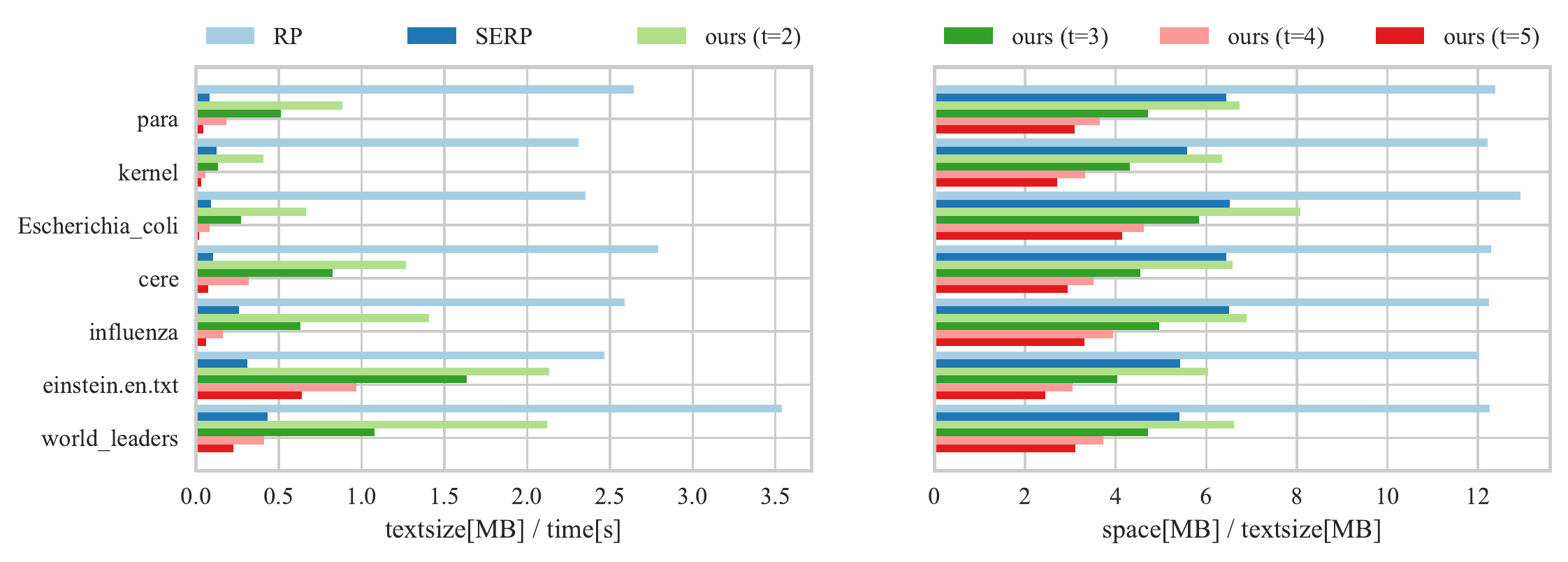}
  \end{center}
  \caption{
    Comparisons in textsize [MB] / time [s] (larger one is faster) and space [MB] / textsize [MB] (smaller one is better).
  }
  \label{fig:graph}
\end{figure}

\section{Conclusions and Future Work}\label{sec:conclusion}
We have proposed the first recompression algorithm for obtaining an output of the RePair algorithm 
via other space-saving grammar compression without decompressing it.
As a consequence, depending on the size of preliminarily compressed input text,
our recompression algorithm can simulate the RePair algorithm in compressed space.
We showed that our algorithm runs in reasonable time for several benchmarks
consisting of highly compressible texts.
Moreover, we showed that our algorithms can be used to reduce the peak memory usage of existing RePair algorithms,
and the approach outperforms the most space efficient linear-time algorithm to date.
A future work is to implement the improved version of the recompression algorithm
achieving the smaller time complexity and examine the performance of running time
compared with other implementations of RePair and its variants.
Another important future work is to prove preciser upper bound and/or lower bound
of the recompression for RePair.
An acquisition of new knowledge about the complexity
would further reduce the running time and space of the proposed algorithm.
These improvements lead us to the final goal: a faster recompression of RePair than the original one 
working in uncompressed space.

\bibliographystyle{plain}% the recommended bibstyle
\bibliography{ref}

\end{document}